\newtheorem*{mydef}{Definition}
\newtheorem*{thm}{Theorem}
\newtheorem*{cor}{Corollary}
\title{Cancer treatment scheduling and dynamic heterogeneity in social dilemmas of tumour acidity and vasculature}
\author[1]{Artem Kaznatcheev}
\author[2]{Robert Vander~Velde}
\author[1,3]{Jacob G. Scott}
\author[1]{David Basanta}
\affil[1]{Integrated Mathematical Oncology, H. Lee Moffitt Cancer Center and Research Institute, Tampa, FL, USA}
\affil[2]{Department of Molecular Medicine, Morsani College of Medicine, University of South Florida, Tampa, FL, USA}
\affil[3]{Wolfson Centre for Mathematical Biology, Mathematical Institute, University of Oxford, Oxford, UK}
\date{\today}
\newcommand{\GLY}[0]{{\it GLY}}
\newcommand{\VOP}[0]{{\it VOP}}
\newcommand{\DEF}[0]{{\it DEF}}
\begin{document}
\begin{bibunit} 
\maketitle

\begin{abstract}
{\bf Background:}
Tumours are diverse ecosystems with persistent heterogeneity in various cancer hallmarks like self-sufficiency of growth factor production for angiogenesis and reprogramming of energy-metabolism for aerobic glycolysis.
This heterogeneity has consequences for diagnosis, treatment, and disease progression.

{\bf Methods:}
We introduce the double goods game to study the dynamics of these traits using evolutionary game theory.  
We model glycolytic acid production as a public good for all tumour cells and oxygen from vascularization via VEGF production as a club good benefiting non-glycolytic tumour cells. 
This results in three viable phenotypic strategies: glycolytic, angiogenic, and aerobic non-angiogenic.

{\bf Results:}
We classify the dynamics into three qualitatively distinct regimes: (1) fully glycolytic, (2) fully angiogenic, or (3) polyclonal in all three cell types. 
The third regime allows for dynamic heterogeneity even with linear goods, something that was not possible in prior public good models that considered glycolysis or growth-factor production in isolation.

{\bf Conclusion:}
The cyclic dynamics of the polyclonal regime stress the importance of timing for anti-glycolysis treatments like lonidamine.
The existence of qualitatively different dynamic regimes highlights the order effects of treatments.
In particular, we consider the potential of vascular renormalization as a neoadjuvant therapy before follow up with interventions like buffer therapy.

\end{abstract}

\newpage

\section{Introduction}

Tumours are highly heterogeneous ecosystems~\cite{HetTheory}, with various cancerous and non-cancerous sub-populations of cells competing for access to space, growth-factors, nutrients, oxygen, and other limited resources.
This existence and persistence of heterogeneity has implications for diagnosis, treatment, and disease progression.\cite{DrugHet,ProgHet,TreatHet}
Yet, if tumour evolutionary dynamics proceed via clonal selection~\cite{cancerEvo1,cancerEvo2,cancerEvo3} then how can more than one clone stably coexist?
Current explanations of intra-tumour heterogeneity include evolutionary neutrality~\cite{neuralHet}, niche specialization~\cite{nicheHet1,nicheHet2}, non-equilibrium dynamics~\cite{nonEqHet}, and frequency dependent selection~\cite{EGTOnco1,EGTOnco2,glyEGT1}.  
It remains an open problem to identify which, or even how many, of these mechanisms are at work in any given neoplasm ~\cite{cancerEvo3}.

The progression of neoplasms to metastatic disease is marked by the acquisition of a number of hallmarks~\cite{hm_old,hm_new}, including self-sufficiency of growth factor production for angiogenesis and reprogramming energy-metabolism for aerobic glycolysis.
As with many of the other hallmarks, there is evidence of intra-tumour heterogeneity in both the production of cytokines like vascular endothelial growth factor (VEGF)~\cite{VEGF_Het,GenHet}, and glycolysis~\cite{glyHetBreast1,glyHetBreast2,glyHetLung}.
Given that it is possible for an individual cancer cell to not invest (as heavily) in angiogenesis or not forgo the benefit of oxygen by avoiding aerobic glycolysis then how do these population level traits evolve, and how are they maintained?
We answer this question with a mathematical model that treats acid production through glycolosysis as a tumour-wide public good that is coupled to the club good of oxygen from better vascularisation.

By investing in better vascularisation -- by over-producing VEGF, for example -- the whole tumour can benefit from an increase (or improvement) in vascularisation and the associated rise in availability of nutrients and oxygen~\cite{angiogensis1}.
An individual cancer cell, however, could reap these benefits from mere proximity to (over)-producers and save on the energetic cost of producing the relevant growth-factor: free-riding on the benefits created by other cancer cells.
These free-riding cancer cells could out-compete the (over)-producers and take over the tumour.
Such a switch away from growth factor (over)-production, however, would decreases the overall fitness of the tumour by making fewer nutrients and oxygen available to all cancer cells. 
What is favourable to individual cancer cell is disfavourable to the society of cancer cells that make up the tumour.
This represents a classic example of an evolutionary social dilemma.

A similar social dilemma exists for the increase of acidification from glycolysis.
It is striking that the up-regulation of glycolysis — the so called Warburg effect~\cite{Warburg1,Warburg2,Warburg3} — occurs even in the absence of hypoxia~\cite{aeroGly1,aeroGly2,hm_new}. 
Glycolysis is inefficient when oxygen is not a limiting factor, raising the conundrum of what selective advantage it provides to a cell to compensate for its energetic cost. 
The acid mediated tumour invasion hypothesis suggests that this advantage comes from the acidification of the tumour microenvironment that leads to an increase in normal cell death and higher invasiveness through increased matrix  degradation~\cite{amti1,amti2,amti3,amti4}. 
In order to benefit from this microenvironmental acidification, the cancer cells need to be resistant to microenvironmental acidity.
Early models have assumed that this resistance is available only to glycolytic cells,~\cite{amti2,glyEGT1,glyEGT2} but there is little evidence to suggest that aerobic cancer cells could not also develop this resistance.
In the absence of hypoxia, such resistant non-glycolytic cells could benefit from aerobic respiration to out-compete the glycolytic cancer cells. 
But such a switch away from glycolysis decreases the overall acid production by the tumour, negating part of the advantage acid-resistant cells have against non-cancerous soma.

Evolutionary game theory (EGT) is a tool available to theoretical biologists to make sense of these sort of social dilemmas. 
Originating with Maynard Smith \& Price~\cite{firstEGT}, EGT is a mathematical approach to modeling frequency-dependent selection where sub-populations interact through phenotypic strategies.
We continue the trend of the earliest applications of EGT to oncology~\cite{EGTOnco1,EGTOnco2} by focusing on the persistence of heterogeneity. 
In recent years, EGT has been used more extensively in oncology to study the conditions that select for more aggressive tumour phenotypes in gliomas~\cite{glyEGT1,glyEGT2}, colorectal cancer~\cite{carcinogenesis1,carcinogenesis2}, multiple myeloma~\cite{MMEGT} and prostate cancer~\cite{prostate}; 
as well as the effects of treatment on the progression of cancer~\cite{treatment1,treatment2}.
Most of this prior work focuses on matrix games, with pairwise interactions between cell types.

Recently, Archetti introduced the multiplayer public goods game to oncology for looking separately at two-strategy problems like growth-factor production~\cite{A13} and the Warburg effect~\cite{A14}.
And with colleagues, he has implemented the growth-factor production game in an experimental system~\cite{A15}.
He has stressed the importance of non-linear benefits from these goods for maintaining heterogeneity.
However, angiogenesis and glycolosysis are intimately related and should not be considered in separation because the benefits of oxygen affect the degree of hypoxia and thus the relative cost/benefit of glycolysis when compared to aerobic metabolism. 
We focus on this inter-dependence of these two hallmarks to continue this research programme by coupling the two goods — the public good of acidification and the club good of vascularization —  in a three-strategy game. 
Our model reveals dynamics that cannot be predicted from treating micro-environmental acidification and vascularization in isolation from each other. 
In particular, we show that heterogeneity is possible with linear goods and that this heterogeneity is not a static equilibrium but a dynamic cycle of constantly changing proportions of cancer cell types. 
This has consequences for the design of treatments, as it suggests that the timing, and order, of therapeutic interventions could drastically affect the outcome.
For example, it shows the importance of preparatory treatment or neoadjuvant therapy that manages the low-frequency cell types in contrast to simply targeting the most common clone, and of the advantages of targeting the tumour microenvironment instead of just targeting the cancer cell. 

\section{Double Goods Game}
\label{sec:game}

Acidity provides a (relative) benefit to all acid-resistant tumour cells that are competing against the acid-sensitive non-cancerous cells~\cite{ba1,ba2,acidSense}, regardless of whether they have aerobic or anaerobic metabolism.
Consider a focal glycolytic cell -- a cell undergoing anaerobic metabolism, even in an aerobic environment -- interacting with $n$ other nearby cells of which $n_G \in [0,n]$ are also glycolytic. 
Together they produce a relative benefit $b_a(n_G + 1)$ due to acidity to be distributed among the $n + 1$ cells. Therefore, this focal glycolytic cells receives a net benefit of $\frac{b_a(n_G + 1)}{n + 1}$.

By averaging over all possible focal glycolytic cells and interaction group compositions, we get that the expected fitness of a population of glycolytic (\GLY) cells with random assortment is:

\begin{equation}
w_G  = \langle \frac{b_a (n_G + 1)}{n + 1} \rangle_{n_G \sim \mathbf{B}_n(x_G)} \label{eq:wG}
\end{equation}

\noindent where $x_G$ is the proportion of \GLY\; in the population, and the angle brackets represent averaging with, in this case, $n_G$ sampled from the binomial distribution with $n$ trials and $x_G$ as the probability of success (i.e. choosing a \GLY\; cell).

For an aerobic cell -- non-\GLY\; -- in a similar case, the benefit due to acid is only $ \frac{b_a n_G}{n + 1}$ since it does not itself produce acid.
But, in addition to the effects of acid, an aerobic cell can also benefit from oxygen delivered by the vasculature.
In particular, a VEGF (over)-producer (\VOP) will receive the benefit $ \frac{b_v (n_V + 1)}{n - n_G + 1}$ but pay a cost $c$ for the higher production, and the aerobic non-(over)-producer of VEGF (\DEF) will receive the benefit $\frac{b_v n_V}{n - n_G + 1}$ but pay no cost, essentially free-riding.
Note that, unlike acidity, the benefit of oxygen from vasculature is divided among only the non-glycolytic cells in the interacting group, thus we are treating it as a club good~\cite{clubgood}.
Alternatively, this makes the glycolytic cells similar to loners in the optional public-goods game~\cite{VPG}.

Therefore, the expected fitnesses of the two aerobic populations with random assortment are:

\begin{align}
w_V & = \langle \frac{b_a n_G}{n + 1} \rangle_{n_G \sim \mathbf{B}_n(x_G)} 
	+ \langle \frac{b_v (n_V + 1)}{n - n_G + 1} \rangle_{ n_G,n_V \sim \mathbf{M}_n(x_G,x_V)} - c \label{eq:wV}\\
w_D & = \langle \frac{b_a n_G}{n + 1} \rangle_{n_G \sim \mathbf{B}_n(x_G)}
	+ \langle \frac{b_v n_V}{n - n_G + 1} \rangle_{n_G,n_V \sim \mathbf{M}_n(x_G,x_V)} \label{eq:wD}
\end{align}

\noindent where $x_V, x_D$ are the proportions of \VOP\;and \DEF\;in the population, and the averages in the second summands are taken with $n_G, n_V$ sampled from the multinomial distribution with $n$ trials and $x_G$ as the probability of the first outcome (i.e. choosing a \GLY\; cell), and $x_V$ as the probability of the second outcome (i.e. choosing a \VOP\; cell).
In each equation, the first summand is the benefit due to acidification and the second is the benefit from the club good of vascularization.
These fitness functions are described in more detail in appendix~\ref{app:genModel}.

The evolutionary dynamics of the population are given by the replicator equation~\cite{repDyn1,repDyn2}:

\begin{align}
\dot{x}_G & = x_G(w_G - \langle w \rangle) \label{eq:xGgen}\\
\dot{x}_V & = x_V(w_V - \langle w \rangle) \label{eq:xVgen}\\
\dot{x}_D & = x_D(w_D - \langle w \rangle) \label{eq:xDgen}
\end{align}

\noindent where $\langle w \rangle = x_G w_G + x_V w_V + x_D w_D$ is the average fitness of the population.

Alternatively, we can write down these dynamics in their factored form (see appendix~\ref{app:factor} for a proof of equivalence) as:

\begin{align}
\dot{p} & = p(1 - p)(w_G - \langle w \rangle_{V,D}) \label{eq:pGen}\\
\dot{q} & = q(1 - q)(w_V - w_D) \label{eq:qGen}
\end{align}

\noindent where $p = x_G$ is the proportion of \GLY, $q = \frac{x_V}{x_V + x_D}$ is the proportion of aerobic cells that over-produce VEFG, and $\langle w \rangle_{V,D} = q w_V + (1 - q)w_D$ is the average fitness of the aerobic cells.

These equations are accurate for large populations at carrying capacity -- {\it in vivo} tumours up against a resource limitation or managed by an immune response -- or in their exponential growth phase -- typical of \emph{in vitro} experiments. 
In other cases they serve as an approximation.
See appendix~\ref{app:repDyn} for more discussion on interpreting replicator dynamics.

\section{Results}
\label{sec:results}

\begin{figure}
\centering
\begin{minipage}[t]{0.66\textwidth}\mbox{}\\[-\baselineskip]
\includegraphics[width=\textwidth]{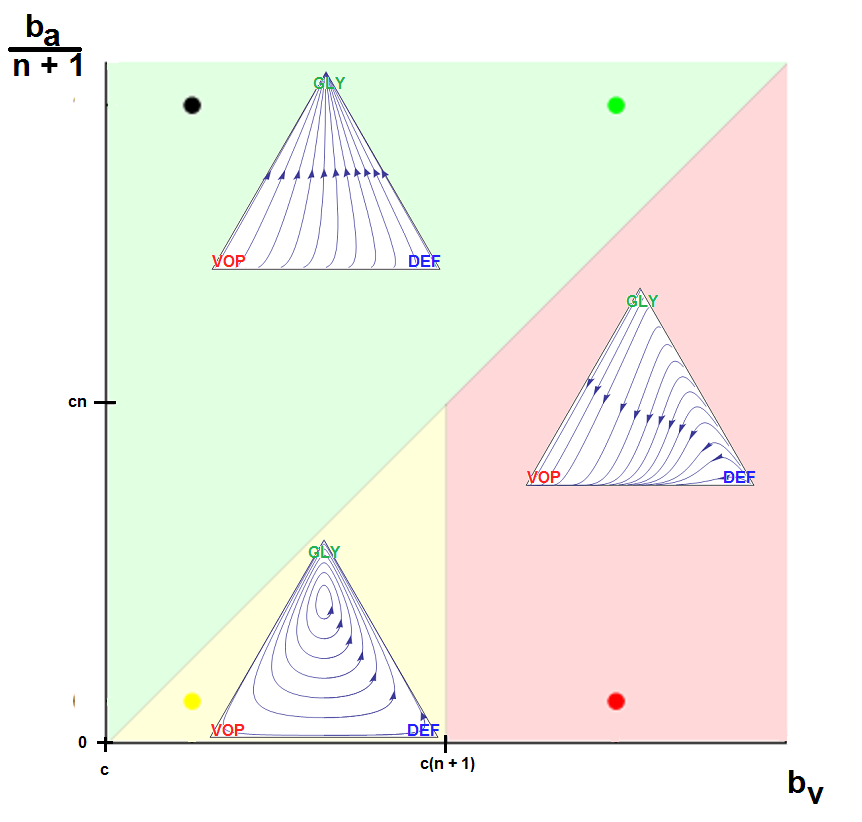}
\end{minipage}\hfill
\begin{minipage}[t]{0.33\textwidth}\mbox{}\\[-\baselineskip]
\caption{The $3$ possible dynamic regimes for the double goods game.
The possible parameter settings for $b_v$ are varied horizontally, starting at $c$.
The possible parameter settings for $\frac{b_a}{n + 1}$ are varied vertically, starting at $0$.
Each of the three inset simplexes have the same coordinates, with the top vertex corresponding to all \GLY, left to all \VOP, and right to all \DEF.
Each simplex is a typical example of dynamics within its regime. 
The specific micro-environmental parameters for each example: (1) $b_a = 37.5$, $b_v = 2$, $c = 1$, $n =4$ for green, (2) $b_a = 2.5$, $b_v = 7$, $c = 1$, $n = 4$ for red, and (3) $b_a = 2.5$, $b_v = 2$, $c = 1$, $n = 4$ for yellow.}
\end{minipage}
\label{fig:dynReg}
\end{figure}

We can apply the model described in the previous section to study the temporal evolution of different populations in various scenarios.
These scenarios are set by the four micro-environmental parameters of our model: $b_a$, the benefit per unit of acidification; $b_v$, the benefit from oxygen per unit of vascularization; $c$, the cost of (over)producing VEGF; and, $n$, the number of interaction partners in the public good.
Within a scenario, the last piece of information is the initial proportions of cells $x_{G}(0), x_{V}(0), x_{D}(0)$ (or $p(0), q(0)$ in the factored form).
The values of these variables will depend on the particular cancer ecosystem (i.e. patient and kind of tumour).
Since these variables can be difficult or impossible to measure clinically, it is important to understand what broad qualitative relationships between them mean for long-term dynamics.

Although the fitness functions in equations~\ref{eq:wG},~\ref{eq:wV}, and~\ref{eq:wD} are Bernstein polynomials of degree $n$, we can use properties of binomial coefficients to simplify the corresponding gain functions of the factored replicator dynamics without any approximation (see appendix~\ref{app:gain}).
This allows us to rewrite equations~\ref{eq:pGen}~and~\ref{eq:qGen} as:

\begin{align}
\dot{p} &= p(1 - p)(\overbrace{\frac{b_a}{n + 1} - q(b_v - c)}^{\text{gain function for } p}) \label{eq:pSimp}\\
\dot{q} &= q(1 - q)(\underbrace{\frac{b_v}{n + 1}[\sum_{k = 0}^n p^k] - c}_{\text{gain function for } q}) \label{eq:qSimp}
\end{align}

Notice that when $p = 0$, equation~\ref{eq:pSimp} recovers the social dilemma of angiogenesis that we discussed in the introduction with the free-riding \DEF\; cells taking over the population of aerobic cells, driving $q$ towards $0$.
When $p \neq 0$ and $q = 0$, we have a hypoxic tumour, and glycolytic cells are favored, driving $p$ towards $1$.
On the other hand, if $p \neq 0$ and $q = 1$, then we recover the social dilemma of aerobic glycolysis, with \VOP\; cells favoured if $b_a < (b_v - c)(n + 1)$, and \GLY\; cells otherwise.
In other words, if one of the strategies is absent in the population then no persistence of heterogeneity is possible among the remaining two strategies.
This is consistent with predictions from two-strategy linear public goods games.

The more interesting case that is unique to our model is when all three strategies are initially expressed in the population.
In this setting, we can analytically characterize the population dynamics into one of three qualitatively different regimes based on the values of the four main micro-environmental parameters by solving how the above functions for $p$ and $q$ cross zero (for more information, see appendix~\ref{app:gain}). 
This classification is shown visually in figure~\ref{fig:dynReg}.
We name these three regimes by their end points (and evolutionary stable strategies); fully glycolytic (green region in figure~\ref{fig:dynReg}), fully angiogenic (red), and heterogeneous (yellow).
The following three subsections describe each of these three regimes in turn.

\subsection{Fully glycolytic tumours: $\frac{b_a}{n + 1} > b_v - c$}

We solve for the average fitness of aerobic cells as:

\begin{equation}
\langle w \rangle_{V,D} = b_a p + q(b_v - c)
\label{eq:wVD}
\end{equation}

\noindent where the first summand is the benefit from acidification, and the second summand is the benefit due to oxygen from vascularization.
Since \DEF\; only consume the club good from vascularization, without producing any, equation~\ref{eq:wVD} is maximized to $b_a p + b_v - c$ when all aerobic cells are producers of the club good (i.e. all \VOP, q = 1). 

Similarily, we can solve for the fitness of \GLY:

\begin{equation}
w_G = b_a p + \frac{b_a}{n + 1}
\label{eq:wGsimp}
\end{equation}

\noindent where the first summand is the benefit from aciditification that all cancer cells receive, and the second summand is the slight increase in acidification that glycolytic cells get from always being in a group with an extra acid producer (themselves). 

From this, we see that if the fitness benefit of a single unit of acidification ($\frac{b}{n + 1}$) is higher than the maximum benefit from the club good for aerobic cells ($b_v - c$) then the difference between eq.~\ref{eq:wGsimp} and eq.~\ref{eq:wVD} is always positive. 
Thus, \GLY\; will always have a strictly higher fitness than aerobic cells, and be selected for. 
In this scenario, the population will converge towards all \GLY, regardless of the initial proportions (as long as there is at least some \GLY\;in the population).
This dynamic regime is achieved for any micro-environmental parameter settings corresponding to the green region in figure~\ref{fig:dynReg}.
Typical dynamics are shown in the inset simplex with micro-environmental parameters $b_a = 37.5, b_v = 2, c = 1, n = 4$.

\subsection{Fully angiogenic tumours: $\{ \frac{b_a}{n + 1}, cn \} < b_v - c$}

Consider an interaction group with $n_V$ \VOP\; and $n_D$ \DEF\;cells.
If the focal agent interacting with this group is an (over)producer then it will receive a benefit from oxygen of $\frac{b_v n_V + 1}{n_V + n_D + 1} - c$.
If the focal agent is a defector then they will receive $\frac{b_v n_V}{n_V + n_D + 1}$.
Since, by definition, $n_V + n_D$ is less than the interaction group size $n$ then regardless of the number of glycolytic cells:

\begin{equation}
w_V - w_D \geq \frac{b_v}{n + 1} - c
\label{eq:VDlow}
\end{equation}

Thus, if the benefit to (over)producers from their extra unit of vascularization ($\frac{b_v}{n + 1}$) is higher than the cost $c$ to produce that unit (or, equivalently, if $b_v - c > cn$) then \VOP\; will always have a strictly higher fitness than \DEF\;, selecting $q$ towards $1$.
In addition, if the maximum possible benefit of the club good to aerobic cells ($b_v - c$) is higher than the benefit of an extra unit of acidification ($\frac{b_a}{n + 1}$) then (based on the difference of eq.~\ref{eq:wGsimp} and~\ref{eq:wVD}, or the negation of the conditions in the previous section) for sufficiently high number of (over)producers ($q$ close enough to $1$), \GLY\; will have lower fitness than aerobic cells.
When both conditions are satisfied, the population will converge towards all \VOP.
This dynamic regime is achieved for any micro-environmental parameter settings corresponding to the red region in figure~\ref{fig:dynReg}.
Typical example dynamics are shown in the inset simplex with micro-environmental parameters $b_a = 2.5, b_v = 7, c = 1, n = 4$.
Notice how if a population starts with mostly aerobic cells ($p$ close to $0$) that are not overproducing VEGF ($q$ close to $0$) then on the way towards all-\VOP\;, the population might see a transient decrease in the number of aerobic cells.

\subsection{Heterogeneous tumours: $\frac{b_a}{n + 1} < b_v - c < cn$}
\label{sec:het}

From eq.~\ref{eq:VDlow}, we know that if the benefit from an extra unit of vascularization in a fully aerobic group ($\frac{b_v}{n + 1}$) is lower than the cost $c$ to produce that unit then for a sufficiently low proportion of \GLY\; and thus sufficiently large number of aerobic cells sharing the club good, \DEF\;will have higher fitness than \VOP. 
This will lead to a decrease in the proportion $q$ of (over)producers among aerobic cells and thus a decrease in the average fitness of aerobic cells $\langle w \rangle_{V,D}$ (see eq.~\ref{eq:wVD}). 
A lower fitness in aerobic cells will lead to an increase in the proportion of \GLY\; until the aerobic groups among which the club good is split get sufficiently small and fitness starts to favour \VOP\;over \DEF, swinging the dynamics back.
Thus, resulting in a rock-paper-scissor like dynamics.

Under this scenario, the population will orbit around an internal fixed-point at $q^* = \frac{b_a}{(b_v - c)(n + 1)}$ and $p^* \in (1 - \frac{b_v}{c(n + 1)}, \frac{c(n + 1)}{b_v} - 1)$. 
The exact position of $p^*$ is the solution to the polynomial equation $\sum_{k = 0}^n p^k = \frac{c(n + 1)}{b_v}$ (see appendix~\ref{app:gain} for details).
The amplitude of the orbit will depend on the distance between $p(0), q(0)$ and $p^*, q^*$.
This dynamic regime is achieved for any micro-environmental parameter settings corresponding to the yellow region in figure~\ref{fig:dynReg}.
Typical example dynamics are shown in the inset simplex with micro-environmental parameters $b_a = 2.5, b_v = 2, c = 1, n = 4$.

\section{Therapeutic Implications}

With the possible dynamic regimes in mind, we can think about treatment in one of two ways: (1) treatments that target the player by directly reducing the proportion of a given strategy in the population, or (2) treatments that target the game by changing the parameters ($b_a$, $b_v$, $c$ or $n$) and taking us from one dynamic regime to another.
In both cases we need to be mindful of counter-intuitive phenomena like timing and order effects and the importance of managing heterogeneity. 

\subsection{Treat the player: targeting cell-types}
\label{sec:treat_player}
\begin{figure}
\centering
\includegraphics[scale=0.80]{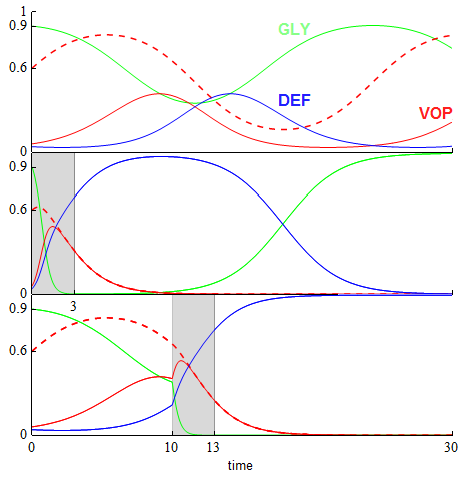}
\caption{Dynamics for an untreated tumour and examples of two different sets of dynamics resulting from changed timings of a given intervention.
Each graph is proportion of cells versus time, with \GLY\; ($x_G$) in green, \VOP\; ($x_V$) in solid red, and \DEF\; ($x_D$) in blue.
The dashed red lines show the proportion of \VOP\; among aerobic cells ($q = \frac{x_V}{x_V + x_D}$).
All three graphs start with same initial conditions ($p(0) (= x_G(0)) = 0.9$, $q(0) = 0.6$) and the same tumour micro-environment ($b_a = 2.5, b_v = 2, c = 1, n = 4$). 
In the second and third panel, we consider an anti-glycolytic treatment of the same strength (subtracting $3$ from the fitness of \GLY) and the same duration ($3$ time steps).
The difference between the second and third panel is in timing, marked in gray: the second panel starts treatment at $t =  0$, and the third at $t = 10$.
If the proportion of a cell-type goes below $10^{-4}$ then it dies off entirely.
At $t = 30$, no cell-types are extinct in the top panel; in the second and third panel, the only non-extinct cell-types are \GLY\; and \DEF, respectively.
This highlights the importance of the timing of therapy for evolutionary outcome.}
\label{fig:time}
\end{figure}

A treatment that targets a given strategy and can be applied long enough to drive that strategy to extinction can be considered a viable intervention.
If we were in the first dynamic regime (green in fig.~\ref{fig:dynReg}; $\frac{b_a}{n + 1} > b_v - c$) or the second dynamic regime (red in fig.~\ref{fig:dynReg}; $\{ \frac{b_a}{n + 1}, cn \} < b_v - c$) then the population will always converge towards all-\GLY\; or all-\VOP, respectively. 
This means that unless the strategy-targeting therapy is strong and long enough to drive that strategy to extinction, it will not affect the overall outcome beyond a potential transient delay.
In particular, the timing of the therapy will not have a qualitatively significant effect.
In the heterogeneous case (yellow in fig.~\ref{fig:dynReg}; $\frac{b_a}{n + 1} < b_v - c < cn$), however, counter-intuitive results are possible and the timing of treatment becomes important.

As an example in figure~\ref{fig:time}, we consider a tumour described by the micro-environmental parameters $b_a = 2.5, b_v = 2, c = 1$ and $n = 4$ and an initial composition $x_G = 0.9, x_V = 0.06, x_D = 0.04$.
If left untreated then the proportions of strategies would cycle around an internal fixed point at $(x_G^*,x_V^*,x_D^*) \approx (0.5,0.32,0.18)$, as seen in the top panel.
In that panel, the proportion of glycolytic cells $x_G$ will oscillate between about $0.9$ and $0.34$ with a period of about $26$ time units.

In the standard paradigm of personalized medicine, the most likely target would be the cell subtype composing the largest proportion of the tumour. In this case, it would be the glycolytic cells -- the most common, and least fit, strategy at $t = 0$. 
We can imagine targeting these cells specifically, with a therapy that imposes a large fitness cost, like lonidamine~\cite{Lonidamine}, for example. 
Here, we choose to set the fitness cost of this therapy to $3$, which leads to very quick and aggressive reduction in the \GLY\; population.
In a perfect world, the therapy would be applied long enough to drive \GLY\; to extinction.
But what happens if it is only applied for $3$ time units, enough to drastically reduce the proportion of \GLY\; cells -- below detecatble levels -- but not below the extinct threshold?
The cyclic dynamics then allow the glycolytic cells to recover by out-competing the mostly \DEF\; population.
To make matters worse, as \GLY\;  recovers, \VOP\; is pushed below the extinction threshold leaving just the other two-cells types to compete.
Without VEGF over-production, the remaining aerobic cells are less fit than the glycolytic cells and are also driven to extinction.
The overall result at $t = 30$ is a relapse with even more glycolytic cells than before treatment.
In this scenario, while the goal was to eliminate glycolytic cells, the opposite occurred: elimination of all the aerobic cells and creation of a completely glycolytic tumour.

If instead the treatment was delayed until $t = 10$ -- when the \GLY\;cells are at their lowest proportions, and highest fitness, in the tumour's natural cycle --  then much more favourable results could be achieved.
With the lower initial proportion of \GLY\; cells, $3$ units of time is long enough to drive the cells to extinction.
Without glycolytic cells, the competition between \DEF\; and \VOP\; becomes a classic social dilemma and the VEGF-overproducers are driven to extinction.
The result is now what was desired: an aerobic tumour with no -- or significantly diminished -- ability to recruit blood vessels.

In this case, we can think of the natural tumour dynamics as a neoadjuvant `treatment' that lowered the \GLY\; population slightly while preparing the \VOP-\DEF\;composition to be in a favourable position after therapy.

\subsection{Treat the game: targeting micro-environment}
\label{sec:treat_game}

\begin{figure}
\centering
\includegraphics[scale=0.80]{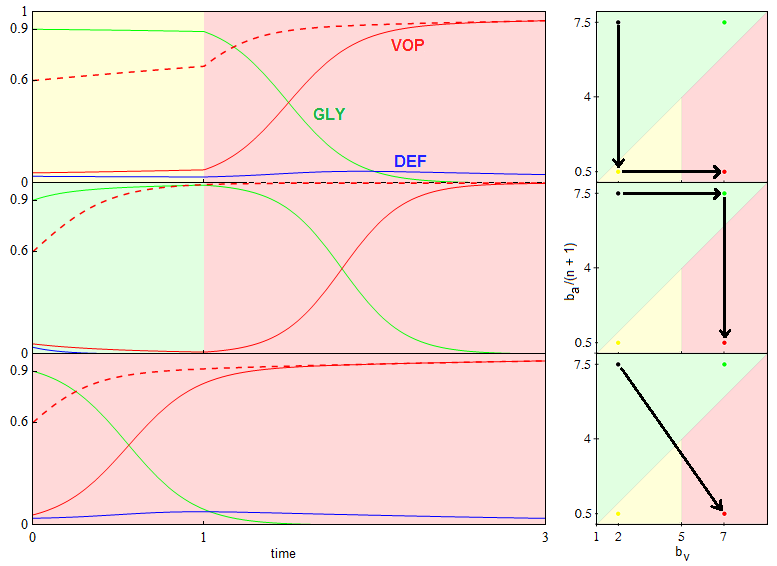}
\caption{Three possible orders of therapeutic intervention. 
Each graph on the left shows the proportion of cells versus time, with \GLY\; ($x_G$) in green, \VOP\; ($x_V$) in red, and \DEF\; ($x_D$) in blue.
The dashed red lines show the proportion of \VOP\; among aerobic cells ($q = \frac{x_V}{x_V + x_D}$).
Each graph on the right is $b_a/(n + 1)$ versus $b_v$ and shows how treatment moves the tumour between qualitatively different dynamic regimes through the space on micro-environmental parameters (for detailed explanation see figure~\ref{fig:dynReg}).
All graphs start with same initial proportions ($x_G(0) = 0.9, q(0) = 0.6$). The untreated tumour has parameters $b_a = 37.5$, $b_v = 2$, $c = 1$ and $n = 4$.
In the top two panels, two treatments are applied: the first at time $0$ and the second at time $1$. 
In the last panel, a single treatment is applied at time $0$.
We consider two game-targeting treatments: (1) a buffer therapy that reduces the benefit of acidity (setting $b_a = 2.5$ from then on; horizontal arrows in the right panel), and (2) a vascular renormalization therapy (VNT; setting $b_v = 7$ from then on; vertical arrows in the right panel).
In the top panel, buffer therapy is followed by VNT, the second panel shows VNT followed by buffer therapy, and in the final panel both treatments are given simultaneously (diagonal arrow in the right panel).}
\label{fig:order}
\end{figure}

Consider the hypothetical case-study in figure~\ref{fig:order}.
Here, we have a situation with poor vasculature ($b_v = 2$, $c = 1$, $n = 4$) and a highly glycolytic tumour ($p_0 = 0.9$, $b_a = 37.5$).
If left untreated, the tumour would quickly reach all-\GLY, driving the other two strategies extinct. 
Thus, the goal is to eliminate \GLY, and create an easier-to-target all \VOP\; tumour without incurring too much tumour heterogeneity.
We consider two possible interventions.
The treatments can be applied sequentially with a brief recovery window of $1$ time-step between them or simultaneously.
One treatment is a buffer therapy~\cite{bufferTherapy} that reduces the benefit of acidity; setting $b_a = 2.5$ from then on. 
Another is vascular normalization therapy (VNT)\cite{VNT}; setting $b_v = 7$ from then on. 
In figure~\ref{fig:order}, the top panel considers giving buffer therapy followed by VNT, the middle panel has the VNT preceding the buffer therapy, and the bottom panel has both treatments at given simultaneously at $t = 0$.

The buffer then VNT (top panel) ordering produces immediate results, with the proportion of \GLY\; no longer increasing -- and even starting to decline -- right away and reaching minimal levels earlier than VNT followed by buffer therapy (middle panel).
However, the top ordering increases the heterogeneity among the anaerobic cells, and although the tumour will eventually move to a state of all \VOP, by time-step $3$ (when it is nearly all \VOP\;for the middle panel) there is a high level of \VOP-\DEF\;heterogeneity, and earlier during treatment (say $t = 2$) the top panel has heterogeneity in all three cell types.
Something that the physician wanted to avoid.

In contrast, the middle ordering sees no immediate results from the VNT.
Instead, this first treatment can be thought of as a neoadjuvant therapy that eliminates the \VOP-\DEF\; heterogeneity among the rare aerobic cells before targeting the predominantly \GLY\; population.
By time-step $3$, the middle ordering sees a similarly high level of response in the \GLY\; phenotype, but without creating a high level of heterogeneity in the tumour.

We might expect that both normalizing the benefit due to oxygen from vascularization and decreasing the benefit from acidification at the same time would offer superior results to sequential therapy. 
However, as we can see from the bottom panel of figure~\ref{fig:order}, this is not necessarily the case. 
Although the same reduction in glycolytic cells is reached $1$ time unit faster than the sequential therapies, the heterogeneity among aerobic cells remains high; like buffer therapy followed by VNT.

Staggering buffer therapy after VNT might not produce immediately evident results but it lets us reach those results without encountering a highly heterogeneous tumour. When targeting the micro-environment, it matters which qualitatively different dynamic regimes the game goes through, even if the final micro-environmental parameters are the same.

\section{Discussion}

Both acidity and vascularization are goods that are costly to produce. 
But these goods are not exclusive to the producing cell and thus can benefit non-producers.
This poses an evolutionary social dilemma.
Non-producing cancer cells can free-ride off the producers by benefiting from these goods (for example, due to proximity to producers) while not paying the energetic costs of production.
Due to their higher relative fitness, such free-riders could then out-compete the producers, driving them to extinction (at a potential cost in absolute fitness to the tumour -- the society of cancer cells).
This raises the social dilemma of cooperation: if non-producers can free-ride then why do we see persistent heterogeneity in both aerobic metabolism~\cite{glyHetBreast1,glyHetBreast2,glyHetLung} and growth factor production~\cite{VEGF_Het,GenHet}? 

Prior work has considered in isolation the production of growth factors like VEGF~\cite{A13} and of acidity from glycolysis~\cite{A14}. 
They concluded that a heterogeneous equilibrium of producers and non-producers cannot exist unless the benefits that these goods provide are non-linear in the number of producers.
In the linear cases, considered separately, this would predict that VEGF (over)-producers (\VOP) and glycolytic (\GLY) cells would both go extinct, leaving a population of aerobic cells that do not call for more vasculature (\DEF).
This is in accord with the intuition that free-riders always win in social dilemmas, and would result in the elimination of heterogeneity of production of VEGF and acidity.

Instead of the separate analyses above, we recognized the natural coupling between the vasculature called by growth factor production and the acidity from glycolytic metabolism.
In particular, although the acidity from glycolysis is a good which is public to all tumour cells, the benefit from oxygen due to vasculature cannot be enjoyed by the glycolytic cells (\GLY).
Vascularization is a club good available only to the cells that continue to rely on aerobic metabolism (\VOP\;and \DEF).
In other words, a producer of the acidity good cannot be a beneficiary of the vasculature good.
We call such goods anti-correlated.

We show that the reductionist intuition leading to all-\DEF\;does not hold in this system of two anti-correlated linear goods.
Instead, we identify three qualitatively different dynamic regimes that end in a tumour that is either: (1) fully glycolytic (all-\GLY), (2) fully angiogenic (all-\VOP), or (3) heterogeneous (polyclonal) in all three cell types.
Which of these regimes is achieved depends on the micro-environmental parameters like the benefit per producer due to acidity ($b_a$), due to vasculature ($b_v$), and the cost of (over)producing VEGF ($c$).
Further, the heterogeneity is not static but maintained by a cycle in proportions of cellular strategies.
Thus, we show that polyclonal tumours made up of three different cellular strategies are evolutionary stable even with linear goods.
This stands in stark contrast to the all-\DEF\;equilibrium that we would expect from considering the linear goods in isolation.
Our results highlight the difficultly of ruling out possible dynamics from overly reductionist accounts of cancer, and the importance of modeling both the vascularization and acidity when studying the Warburg effect~\cite{Warburg2,Warburg3}.

The dynamic nature of the polyclonal equilibrium reminds us of the importance of tracking the tumour composition through time, not basing treatment on measurements from a single time-point, and optimizing the timing of treatment.
As an example, we consider an anti-glycolysis treatment like \emph{lonidamine}~\cite{Lonidamine}.
If timed correctly and applied for long enough, then this treatment can drive the glycolytic cells extinct and reduce the tumour to the two-strategy case of VEGF-production considered in prior work~\cite{A13}.
From there, somatic evolution will drive VEGF over-producers extinct, leaving us with an all-\DEF\;tumour (lower panel of figure~\ref{fig:time}).
However, if the same treatment is applied at the wrong time in the cycle of heterogeneity (or not for long enough) then the glycolytic population can recover while the VEGF over-producers are driven extinct by non-producers.
Without VEGF over-producers, the glycolytic cells can out-compete the aerobic cells and drive them to extinction, resulting in a fully glycolytic tumour (middle panel of figure~\ref{fig:time}).
A backfire effect for treatment.

Since the heterogeneous equilibrium is not the only possible outcome of these game dynamics, it is also important to measure the micro-environmental parameters like the benefit per unit of vascularization and per unit of aciditification that determine the game.
This dichotomy between tumour composition and micro-environmental parameters carries over from measurement to treatment.
In section~\ref{sec:treat_game}, we consider treatments like buffer therapy~\cite{bufferTherapy} and vascular renormalization therapy (VNT)\cite{VNT} to change the microenvironmental parameters and thus target the game.
By shifting to a more desirable game, we allow natural somatic evolution to lead us to a better outcome.
The order in which we shift between games is important, especially for transient heterogeneity.

For highly glycolytic tumours, it is important to consider neoadjuvant VNT prior to buffer therapy.
VNT allows us to reduce the heterogeneity in aerobic cells before targeting the more common glycolytic cells.
Thus, when buffer therapy turns the game against glycolytic cells, the tumour is prepped in a low heterogeneity state and moves to an all-\VOP\; phenotype without high levels of transient polyclonality.
If buffer therapy is applied before VNT, or even if the two are applied simultaneously, then the response of glycolytic cells is quicker but also prone to creating much longer lasting heterogeneity in the tumour.
We expect that similar considerations of neo-adjuvant therapy for managing the rarer cancer sub-types might prove effective in other cancer settings.

There are several different types of heterogeneity that can work for or against the patient.
These include strategy heterogeneity (i.e. a polyclonal tumour) and game heterogeneity (i.e. differences in micro-environmental parameters).
Physicians have to be mindful of both types when treating a given patient.
So far, we concentrated on difference in micro-environmental parameters due to variation between patients, tissues, and the effects of therapy.
However, there can also be game heterogeneity within different regions of the same tumour.
For example, we have previously shown~\cite{edge} that the game within the bulk of the tumour can differ from the game at static boundaries.
But it is through static boundaries like blood vessels, organ capsules, or basement membranes that metastatic invasion happens.  
If physicians want to minimize the risk of metastates then it is important to measure the microenvironmental parameters at such boundaries.
Future work could more explicitly model invasiveness by coupling our model to the go-vs-grow game~\cite{glyEGT1} and also broadening the analysis to \emph{ex situ} cells.

\putbib
\end{bibunit}
\newpage
\begin{bibunit} 

\begin{appendices}
\appendixpage
In these appendices, we develop and discuss the tools used to define and solve our model from section~\ref{sec:game}. The structure is as follows:
\begin{description}
\item{\ref{app:genModel}} Our model in general terms, with more detail than was possible in section~\ref{sec:game}.
We outline the linear benefit functions used in this paper, and possible generalizations to non-linear benefit functions for future work.

\item{\ref{app:factor}} Proof of a theorem for factoring any replicator dynamics and application of it to the specific case of factoring from $\{x_G,x_V,x_D\} \mapsto \{p,q\}$ (transforming eqs.~\ref{eq:xGgen},~\ref{eq:xVgen}~and~\ref{eq:xDgen} to eqs.~\ref{eq:pGen}~and~\ref{eq:qGen}) presented in this paper.

\item{\ref{app:repDyn}} Discuss of the generality of replicator dynamics with a focus on populations of constant size (\ref{app:constant}), exponentially growing populations (\ref{app:exponential}), and \emph{in vitro} systems (\ref{app:experimental}).

\item{\ref{app:gain}} Solution to the gain functions needed to transform the system in equations~\ref{eq:pGen}~and~\ref{eq:qGen} into equations~\ref{eq:pSimp}~and~\ref{eq:qSimp}.
We use this second description of our model to characterize the possible dynamics into the three qualitatively different regimes used in section~\ref{sec:results}.

\item{\ref{app:orbits}} Proof of closed orbits for the heterogeneous regime of section~\ref{sec:het}.
To achieve this, we show that -- after dynamic rescaling of time -- our system is Hamiltonian.
\end{description}

\section{General model for the double goods game}
\label{app:genModel}

Following Archetti~\cite{A14}, we consider the acidity produced by glycolysis, which affects normal cells more than cancer cells, as a public good for cancer cells. 
Acidity benefits all tumour cells, regardless of whether they have aerobic or anaerobic metabolism. 
In contrast, the increased oxygen from vascularization benefits only aerobic cells and so within a tumour it is a good that is private to non-glycolitic cells. 
Among just the aerobic tumour cells, following Archetti~\cite{A13}, vascularization is a public good since it benefits all aerobic cells in a local population regardless of how much VEGF -- or other vascularization increasing factors -- a particular individual is producing. 
Therefore, vascularization is a club good~\cite{clubgood}, and in game theoretic terms there are two kinds of defection. 
All tumour cells benefit from acidity but can defect from contributing to this public good by using aerobic metabolism. 
Any aerobic cell can benefit from higher vascularization but can defect from contributing to this club good by not producing, or producing less, VEGF. 
Since being a defector in the public good is required to benefit from the club good, we call the two goods anti-correlated.

If we consider a cell interacting with $n$ other nearby cells and define $A_n(k) = \frac{b_a k}{n}$ as the benefit due to acidity if $k$ cells are producing acid, and $V_n(k) = \frac{b_v k}{n}$ as the benefit (to aerobic cells) due to vascularization if $k$ cells are (over) producing VEGF at a cost $c$. 
This results in three pure strategies with the following payoffs:

\begin{description}
\item{\GLY} The glycolytic strategy does not use oxygen and thus produces acid.
It does not (over) produce VEGF to call for more blood vessels and thus does not increase vascularization. When interacting with $n_G$ other \GLY\;cells, its payoff is: $A_n(n_G + 1)$.
\item{\VOP} This strategy uses oxygen and thus does not produce acid. It also (over) produces VEGF at a cost $c$ and thus calls for more blood vessels, directly increasing vascularization. 
When interacting with $n_G$ \GLY\;cells and $n_V$ other \VOP\;cells, its payoff is: $A_n(n_G) + V_{n - n_G}(n_V + 1) - c$. 
Note that the oxygen benefit due to increased vasculization is shared only between non-glycolitic cells, hence the club good group size for $V$ is $n - n_G$.
\item{\DEF} This strategy is the universal defector, it uses oxygen and thus does not produce acid, and it also does not (over) produce VEGF. 
When interacting with $n_G$ \GLY\; cells and $n_V$ \VOP\; cells, its payoff is: $A_n(n_G) + V_{n - n_G}(n_V)$.
\end{description}

We do not explicitly consider the case of a strategy that does not use oxygen but produces VEGF;
when the cost of (over) producing VEGF $c > 0$, the fitness of this fourth strategy will always be strictly lower than \GLY\;and thus always driven to extinction. Across our population of cells, our fitness functions are thus:

\begin{align}
w_G & = \sum_{n_G + n_V + n_D = n} \binom{n}{n_G,n_V,n_D} x_G^{n_G} x_V^{n_V} x_D^{n_D} A_n(n_G + 1) \\
& = \sum_{n_G = 0}^n \binom{n}{n_G} x_G^{n_G}(1 - x_G)^{n - n_G} A_n(n_G + 1) \\
& = \langle A_n(n_G + 1) \rangle_{n_G \sim \mathbf{B}_n(x_G)} \\
w_V & = \sum_{n_G + n_V + n_D = n} \binom{n}{n_G,n_V,n_D} x_G^{n_G} x_V^{n_V} x_D^{n_D} ( A_n(n_G) + V_{n - n_G}(n_V + 1) - c ) \\
& = \langle A_n(n_G) \rangle_{n_G \sim \mathbf{B}_n(x_G)} + \langle V_{n - n_G}(n_V + 1) \rangle_{(n_G,n_V) \sim \mathbf{M}_n(x_G,x_V)} - c \\
w_D & = \sum_{n_G + n_V + n_D = n} \binom{n}{n_G,n_V,n_D} x_G^{n_G} x_V^{n_V} x_D^{n_D} ( A_n(n_G) + V_{n - n_G}(n_V) ) \\
& = \langle A_n(n_G) \rangle_{n_G \sim \mathbf{B}_n(x_G)} + \langle V_{n - n_G}(n_V) \rangle_{(n_G,n_V) \sim \mathbf{M}_n(x_G,x_V)}
\end{align}

\noindent where $\mathbf{B}_n(x_G)$ is the binomial distribution with $n$ samples and $x_G$ is the probability of success; $\mathbf{M}_n(x_G,x_V)$ is the multinomial distribution with $n$ samples and $x_G$ is the probability of the first outcome, $x_V$ of the second; and $\langle f(k) \rangle_{k \sim \mathbf{K}}$ means the expected value of $f(k)$ over $k$ sampled from the distribution $\mathbf{K}$.

If we define the proportion of glycolytic cells as $p := x_G$ and the proportion of VEGF producers among the aerobic cells as $q := \frac{x_V}{x_V + x_D}$, then we can express the replicator dynamics in the factored form (see appendix~\ref{app:factor}):

\begin{align}
\dot{p} &= p(1 - p)(w_G - \langle w \rangle_{V,D}) \\
\dot{q} &= q(1 - q)(w_V - w_D)
\end{align}

\noindent where $\langle w \rangle_{V,D} = q w_V + (1 - q)w_D$. 

We discuss the generality and applicability of replicator dynamics in appendix~\ref{app:repDyn}. In appendix~\ref{app:gain}, we show how the gain functions of these dynamics are simplified significantly in the linear case that we consider in this paper.
Our focus on the linear case allows us to borrow many tools of analysis from the optional public-goods game.~\cite{VPG,ourVPG}
More complicated functional forms of $A_n$ and $V_n$ could be considered, and yield a fascinating family of possible dynamics, but this is beyond the scope of the current paper.
For an initial treatment of the non-linear case, see Kaznatcheev~\cite{nonlinDPG}.
In the case of $p = 0$ the non-linear case reduces to~\cite{A13}, and in the case of $q = 0$ or $q = 1$ the non-linear case reduces to~\cite{A14}.

Mathematica notebooks for visualizing and interacting with the model are available at \url{https://github.com/kaznatcheev/MathOnco} as \emph{dgLinB.nb}, \emph{dgLinB\_Simplexes.nb}, and \emph{dgLinB\_TreatPlayer.nb}.

\section{Factoring the replicator equation}
\label{app:factor}

In this section, we are going to present a general trick for factoring replicator dynamics into a nested form. 
We will then apply that trick to transform $\{x_G,x_V,x_D\} \mapsto \{p,q\}$. 
An early form of the account in this appendix first appeared in~\cite{factor}, and Hauert et al.~\cite{VPG} previously considered the special case of optional public goods game.

\begin{mydef}
A compound strategy $(Y_j,W_j,c_j)$ of a set of strategies $\{x_i,w_i\}_{i = 1}^n$ is defined by:
\begin{itemize}
\item a component function $c_j$ with $\sum_{i = 1}^n c_j(i) = 1$,
\item a weight $Y_j := \sum_i^n c_j(i)x_i$
\item a profile $y_{ij} = \frac{c_j(i)x_i}{Y_j}$, and
\item a fitness $W_j = \sum_{i = 1}^n  y_{ij} w_i$.
\end{itemize}
\end{mydef}

\begin{mydef}
A set of compound strategies $\{(Y_j,W_j,c_j)\}_{j = 1}^m$ factors a set of strategies $\{(x_i,w_i)\}_{i = 1}^n$ with $x_i = \sum_{j:c_j(i) \neq 0} \frac{y_{ij}Y_j}{c_j(i)}$ if for all $1 \leq i \leq n$, $\sum_{j = 1}^m c_j(i) = 1$.
\end{mydef}

\begin{thm}
If $\{(Y_j,W_j,c_j)\}_{j = 1}^m$ factors $\{(x_i,w_i)\}_{i = 1}^n$ then the system
\begin{equation}
\dot{x_i} = x_i(w_i - \langle w \rangle)
\end{equation}

where $\langle w \rangle = \sum_{k = 1}^n x_kw_k$, and the system

\begin{align}
\dot{Y_j} & = Y_J(W_j - \langle W \rangle) \\
\dot{y_{ij}} & = y_{ij}(w_i - \langle w \rangle_j)
\end{align}

where $\langle W \rangle = \sum_{k = 1}^m Y_k W_k$ and $\langle w \rangle_j = \sum_{k = 1}^n y_{kj}w_k $describe the same dynamics.
\end{thm}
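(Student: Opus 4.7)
The plan is to prove the equivalence by direct time-differentiation of the compound quantities $Y_j$ and $y_{ij}$, using the replicator equations for the $x_i$. The central preliminary observation, which I would establish first, is that the average fitness is preserved under the factoring: $\langle W \rangle = \langle w \rangle$. Unwinding the definitions gives
\begin{equation*}
\langle W \rangle \;=\; \sum_{j} Y_j W_j \;=\; \sum_{j}\sum_{i} c_j(i)\, x_i\, w_i \;=\; \sum_{i} w_i\, x_i \sum_{j} c_j(i),
\end{equation*}
and the factoring hypothesis $\sum_{j} c_j(i) = 1$ collapses the inner sum, yielding $\sum_i x_i w_i = \langle w \rangle$. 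This is the identity that makes the rest of the algebra work, and it is the step where the hypothesis is actually used.

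Next I would compute $\dot{Y}_j$ by the chain rule, exploiting that the coefficients $c_j(i)$ are constants: $\dot{Y}_j = \sum_i c_j(i)\, \dot{x}_i = \sum_i c_j(i)\, x_i (w_i - \langle w \rangle)$. Recognizing $\sum_i c_j(i) x_i w_i = Y_j W_j$ (directly from the definitions of $W_j$ and $y_{ij}$) and $\sum_i c_j(i) x_i = Y_j$, one gets $\dot{Y}_j = Y_j(W_j - \langle w \rangle) = Y_j(W_j - \langle W \rangle)$ after invoking the preliminary. Then for $\dot{y}_{ij}$, the quotient rule applied to $y_{ij} = c_j(i) x_i / Y_j$ gives $\dot{y}_{ij} = y_{ij}(\dot{x}_i/x_i - \dot{Y}_j/Y_j)$; substituting both replicator equations produces $y_{ij}\bigl((w_i - \langle w \rangle) - (W_j - \langle W \rangle)\bigr)$, and the average-fitness terms cancel, leaving $\dot{y}_{ij} = y_{ij}(w_i - W_j) = y_{ij}(w_i - \langle w \rangle_j)$ since $W_j = \langle w \rangle_j$ by definition.

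For the converse direction, I would observe that the map $\{x_i\} \mapsto \{(Y_j, y_{ij})\}$ is a smooth bijection on the open region where every $Y_j > 0$, with inverse given explicitly by $x_i = \sum_{j: c_j(i) \neq 0} y_{ij} Y_j / c_j(i)$ (where the factoring condition ensures the reconstructed $x_i$ has the correct value). Differentiating this reconstruction and substituting the factored dynamics should yield the original replicator equation for $x_i$; alternatively, one can appeal to uniqueness of ODE solutions, since both systems produce the same trajectories from corresponding initial conditions. The main obstacle I anticipate is notational rather than conceptual: keeping the multi-indexed sums over $c_j(i)$, $y_{ij}$, $Y_j$, and $W_j$ straight, and invoking the factoring hypothesis at precisely the right place — namely in proving $\langle W \rangle = \langle w \rangle$, which is what makes the $\dot{y}_{ij}$ calculation telescope to the clean form claimed.
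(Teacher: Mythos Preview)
Your proposal is correct and shares the paper's key lemma --- the identity $\langle W \rangle = \langle w \rangle$, proved exactly as you describe and invoked at the same point. The main difference is directional: you differentiate the definitions $Y_j = \sum_i c_j(i) x_i$ and $y_{ij} = c_j(i) x_i / Y_j$ under the assumption that the $x_i$ satisfy the original replicator equation, thereby deriving the factored system; the paper instead differentiates the reconstruction $x_i = \sum_j y_{ij} Y_j$ under the assumption that $(Y_j, y_{ij})$ satisfy the factored system, and recovers $\dot{x}_i = x_i(w_i - \langle w \rangle)$ via the product rule. Your route is arguably cleaner for the $\dot{y}_{ij}$ step (the quotient-rule telescoping is transparent), and you are more explicit than the paper about needing both directions and the bijectivity of the change of variables; the paper leaves the converse implicit. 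Either direction, combined with the smooth invertibility you note, suffices for the equivalence, so the two arguments are essentially mirror images of one another.
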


\begin{proof}
First, notice that $\langle w \rangle_j := \sum_i y_{ij}w_i = W_j$, and 
\begin{align}
\langle W \rangle & = \sum_j Y_j W_j  = \sum_j Y_j \sum_i y_{ij} w_i \\
& = \sum_{i,j} c_j(i) x_i w_i = \sum_i x_i w_i \\
& = \langle w \rangle.
\end{align}

Now check that the dynamics are equal:

\begin{align}
\dot{x_i} & = \frac{d}{dt}(\sum_{j:} c_j(i) x_i  = \frac{d}{dt}(\sum_{j:c_j(i) \neq 0} c_j(i) \frac{y_{ij}Y_j}{c_j(i)})  \\
& = \sum_j \frac{d}{dt}(y_{ij}Y_j)  = \sum_j (\dot{y_{ij}}Y_j + y_{ij}\dot{Y_j}) \\
& = (\sum_j y_{ij}(w_i - \langle w \rangle_j)Y_j + y_{ij}Y_{j}(W_{j} - \langle W \rangle)) \\
& = \sum_j y_{ij}Y_j(w_i - \langle w \rangle)  = (\sum_j c_j(i) x_i)(w_i - \langle w \rangle) \\ 
& = x_i(w_i - \langle w \rangle)
\end{align}

\end{proof}

We can apply the above theorem to the particular game in section~\ref{sec:game} and appendix~\ref{app:genModel}.

\begin{cor}
The dynamics given in section~\ref{sec:game} for \GLY-\VOP-\DEF\; competition:
\begin{align}
\dot{x_G} = x_G(w_G - \langle w \rangle) \\
\dot{x_V} = x_V(w_V - \langle w \rangle) \\
\dot{x_D} = x_D(w_D - \langle w \rangle)
\end{align}
are equivalent to their factored form for glycolysis and angiogenesis competition:
\begin{align}
\dot{p} = p(1 - p)(w_G - \langle w \rangle_{V,D}) \\
\dot{q} = q(1 - q)(w_V - w_D)
\end{align}
where $\langle w \rangle_{V,D} = qw_V + (1 - q)w_D$.
\end{cor}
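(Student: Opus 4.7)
The plan is to apply the general factoring theorem just proved, with a specific two-compound decomposition that aggregates the aerobic cells into one compound and isolates the glycolytic cells in the other. Specifically, I would take $m = 2$ with component functions $c_1(G) = 1$, $c_1(V) = c_1(D) = 0$ and $c_2(G) = 0$, $c_2(V) = c_2(D) = 1$. The factoring condition $\sum_{j} c_j(i) = 1$ for every $i \in \{G,V,D\}$ then holds by inspection, so the theorem is applicable.

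Next I would read off the four ingredients attached to each compound strategy from their definitions. The weights become $Y_1 = x_G = p$ and $Y_2 = x_V + x_D = 1 - p$. The only nontrivial profile entries are $y_{G,1} = 1$, $y_{V,2} = x_V/(x_V + x_D) = q$, and $y_{D,2} = x_D/(x_V + x_D) = 1 - q$. Consequently the compound fitnesses collapse to $W_1 = w_G$ and $W_2 = q w_V + (1-q) w_D = \langle w \rangle_{V,D}$.

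Invoking the theorem, the $Y$-level equation produces $\dot{p} = \dot{Y_1} = p(w_G - \langle W \rangle)$, while the profile equation inside the aerobic compound yields $\dot{q} = \dot{y_{V,2}} = q(w_V - \langle w \rangle_2) = q(w_V - W_2)$. Using the identity $\langle W \rangle = \langle w \rangle = p w_G + (1-p)\langle w \rangle_{V,D}$ established en route to the theorem, the first equation simplifies to $\dot{p} = p(1-p)(w_G - \langle w \rangle_{V,D})$, and expanding $W_2 = q w_V + (1 - q)w_D$ the second becomes $\dot{q} = q(1 - q)(w_V - w_D)$, which is exactly the factored form claimed.

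There is no real obstacle here; the only bookkeeping point worth flagging is that the theorem also delivers the auxiliary equations $\dot{Y_2}$ and $\dot{y_{D,2}}$, together with the trivial $\dot{y_{G,1}} = 0$. These are automatically consistent with $\dot{Y_2} = -\dot{p}$ and $\dot{y_{D,2}} = -\dot{q}$ because the normalization $\sum_j c_j(i) = 1$ forces $Y_2 = 1 - Y_1$ and $y_{D,2} = 1 - y_{V,2}$ as identities, so no extra verification is needed.
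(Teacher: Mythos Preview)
Your proposal is correct and follows essentially the same approach as the paper: the same two-compound factoring with $c_1$ isolating \GLY\ and $c_2$ aggregating the aerobic cells, followed by the same algebraic simplification of $\dot{p}$ and $\dot{q}$ via $\langle W \rangle = p w_G + (1-p)\langle w \rangle_{V,D}$ and $W_2 = q w_V + (1-q)w_D$. The paper's proof is slightly terser and does not comment on the redundant auxiliary equations, but the substance is identical.
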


\begin{proof}
Use the component functions $c_A(G) = 1, c_A(V) = 0, c_A(D) = 0$ and $c_B(G) = 0, c_B(V) = 1, c_B(D) = 1$ and name $p := Y_A = x_G$ and $q := y_{B,V} =  \frac{x_V}{x_V + x_D}$. Notice that $W_A = w_G$ and $W_B (=  \langle w \rangle_B) =\langle w \rangle_{V,D}$, giving us the equations:
\begin{align}
\dot{p} & = p(w_G - \langle w \rangle) \\
& = p(w_G - pw_G - (1 - p)\langle w \rangle_{V,D}) \\
& = p(1 - p)(w_G - \langle w \rangle_{V,D}) \\
\dot{q} & = q(w_V - \langle w \rangle_{V,D}) \\
& = q(w_V - qw_V - (1 - q)w_D) \\
& = q(1 - q)(w_V - w_D)
\end{align}
\end{proof}

\section{Replicator dynamics for constant and exponentially growing populations}
\label{app:repDyn}

In the late 1970s, replicator dynamics were introduced into theoretical biology and ecology literature by a number of authors~\cite{repDyn1,repDyn3,repDyn4,MS82}.
As we discussed in the introduction, they have been used extensively since then, coming into mathematical oncology near the turn of the millennium~\cite{EGTOnco1,EGTOnco2}.
However, confusion still exists about the range of applicability of replicator dynamics.
In this section, we will describe three different ways to get the replicator equation from micro-dynamical or experimental considerations:
\begin{description}
\item{\ref{app:constant}} We discuss Traulsen et al.'s~\cite{constPop} treatment of constant size.
\item{\ref{app:exponential}} We build up from an exponential growth model.
\item{\ref{app:experimental}} We will start from an experimental system with arbitrary \emph{in vitro} growth.
\end{description}
None of these micro-dynamical foundations perfectly capture the \emph{in vivo} conditions of a cancer patient.
But the robustness of the replicator equations to changes in founding assumptions should convince us that they are a good starting point for the heuristic model in this paper. 
This appendix closely follows Kaznatcheev~\cite{manyRepDyn}.

\subsection{Fitness as probability to reproduce in populations of constant size}
\label{app:constant}

A focus on individual agents pushes the theorist to look for definitions of fitness on the individual -- rather than population -- level. 
We find such a definition in the Moran process~\cite{moran1,moran2}, and with it comes the realization of replicator dynamics as a process on well-mixed and large but fixed sized populations.

In a Moran process, we imagine that a population is made up of a fixed number $N$ of individuals. 
An agent is selected to reproduce in proportion to their game payoff, and their offspring replaces another agent in the population, chosen uniformly at random. 
This gives us a very clear individual account of fitness as a measure of the probability to place a replicate into the population.

Traulsen, Claussen, and Hauert~\cite{constPop} wrote down the Fokker-Planck equation for the above Moran process, and then use Ito-calculus to derive a Langevin equation for the evolution of the proportions of each strategy $x_k$. The fluctuations in this stochastic equation scale with $1/\sqrt{N}$ and so vanish in the limit of large $N$. This reduces them to a deterministic limit of the replicator equation in Maynard Smith form~\cite{MS82}, with the fitness functions as the payoff functions:

\begin{equation}
\dot{x}_k = x_k\frac{w_k - \langle w \rangle}{\langle w \rangle}
\end{equation}

\noindent where the extract condition of $\langle w \rangle > 0$ is introduced. 
In the limit of weak-selection, this form is equivalent to the Taylor form~\cite{repDyn1} used in the body of the paper:

\begin{equation}
\dot{x}_k = x_k(w_k - \langle w \rangle)
\end{equation}

Even without weak-selection, the two systems of equations differ only by dynamic time rescaling and thus have the same fixed points, orbits, and paths. 
Since these are often the things we care most about during our analyses, we can use the equations interchangeably.

\subsection{Exponentially growing populations}
\label{app:exponential}

Suppose that for some theoretical reasons, we think that the microdynamics of our population is well represented by an exponential growth model. 
Consider $m$ types of cells with $N_1, ... , N_m$ individuals each, growing with growth rates $w_1, ..., w_m$, which could be functions of various other parameters. The population dynamics are then described by the set of $m$ differential equations: $\dot{N}_k = w_k N_k$ for $1 \leq k \leq m$. 

Let $N = N_1 + ... + N_m$, and look at the dynamics of $x_k = N_k/N$:

\begin{align}
\dot{x}_k & = \frac{\dot{N}_k}{N} - \frac{N_k\dot{N}}{N^2} \\
& = \frac{w_k N_k}{N} - \frac{N_k}{N}\frac{\sum_{i = 1}^m w_i N_i}{N} \\
& = \frac{N_k}{N}(w_k - \sum_{i = 1}^m w_i \frac{N_i}{N}) \\
& = x_k(w_k - \langle w \rangle)
\end{align}

which is just the replicator dynamics. 
Thus, replicator dynamics perfectly describe an exponentially growing population.

\subsection{Fitness as experimental growth rate}
\label{app:experimental}

Alternatively, we can start from measurement. 
In particular, measurements of growth rates as would be done for \emph{in vitro} experiments. 
We will recover replicator dynamics from purely experimental outputs or their limits.

If we are running an experiment with some tagged cells of $m$ many types: $1, ... , m$ then our most basic primitives are (estimates of) the sizes of the seeding populations $N^I_1, ... , N^I_m$ and the size of the final populations $N^F_1, ... , N^F_m$. 
Given these values and a time difference $\Delta t$ between when the experiment was started and when it ended, the experimental population growth rates are:

\begin{equation}
w_k := \frac{N^F_k - N^I_k}{N^I_k \Delta t}
\end{equation}

this can be rotated to give us a mapping $N_I \mapsto N_F$:

\begin{equation}
N^F_k = N^I_k (1 + w_k\Delta t)
\end{equation}

From defining the initial and final population sizes $N^{\{I,F\}} = \sum_{k=1}^m N^{\{I,F\}}_k$, we can compare the initial and final proportions of each cell type:

\begin{align}
x^I_k & = \frac{N^I_k}{N^I} \\
x^F_k & = \frac{N^F_k}{N^F} \\
& = x^I_k \frac{1 + w_k\Delta t}{1 + \langle w \rangle \Delta t}
\end{align}

where $\langle w \rangle = \sum_{k = 1}^m x^I_k w_k$.

So far we were looking at a discrete process. 
But we can approximate it with a continuous one. 
In that case, we can define $x_k(t) = x^I_k,\;\; x_k(t + \Delta t) = x^F_k$ and look at the limit as $\Delta t$ gets very small:

\begin{align}
\dot{x} & = \lim_{\Delta t \rightarrow 0} \frac{x_k(t + \Delta t) - x_k(t)}{\Delta t} \\
& = \lim_{\Delta t \rightarrow 0} \frac{x^I_k}{\Delta t} ( \frac{1 + w_k\Delta t}{1 + \langle w \rangle \Delta t} - 1) \\
& = \lim_{\Delta t \rightarrow 0} x_k \frac{w_k - \langle w \rangle}{1 + \langle w \rangle \Delta t} \\
& = x_k(w_k - \langle w \rangle)
\end{align}

\noindent Therefore, we recover replicator dynamics and give an explicit experimental interpretation for all of our theoretical terms.

\section{Gain functions to characterize dynamics}
\label{app:gain}

This section closely follows \cite{VPG,linDPG}.

In the linear public goods that we consider in this paper, each contributor increases the public (or club) good by a constant amount and the whole good is divided evenly among the whole public (or club) regardless of if they contributed. In symbols: $A_n(k) = \frac{b_a k}{n + 1}$ and $V_n(k) = \frac{b_v k}{n + 1}$. 

This gives us the gain functions for $q$:

\begin{align}
w_V - w_D & = \sum_{n_G + n_V + n_D = n} \binom{n}{n_G,n_V,n_D} x_G^{n_G} x_V^{n_V} x_D^{n_D} (\frac{b_v}{n - n_G + 1} - c) \\
& = \sum_{m = 0}^n \binom{n}{m} p^{n - m}(1 - p)^m \frac{b_v}{m + 1} - c
\end{align}

\noindent where we relabeled by defining $m := n_V + n_D (= n - n_G)$ as the number of participants in the club good. 
The issue now is to eliminate the $m + 1$ in the denominator which can be done by observing that $\binom{n}{m} = \frac{m + 1}{n + 1}\binom{n + 1}{m + 1}$.
Resuming:

\begin{align}
w_V - w_D & = \sum_{m = 0}^n \binom{n + 1}{m + 1} p^{n - m}(1 - p)^m \frac{b_v}{n + 1} - c \\
& = = \frac{b_v}{(1 - p)(n + 1)}([\sum_{m' = 0}^{n'}\binom{n'}{m'}p^{n' - m'}(1 - p)^{m'}] - \binom{n'}{0}p^{n'}) - c\\
& = \frac{b_v}{(1 - p)(n + 1)}(1 - p^{n + 1})- c, \label{eq:VDgain} \\
& = \frac{b_v}{n + 1}[\sum_{k = 0}^n p^k] - c
\end{align}

\noindent where in the first to second step, we relabel with $n' = n + 1$ and $m' = m + 1$ to get the binomial distribution. 
The final step follows from polynomial division. 

Continuing on to the edge between \GLY\; and \DEF:

\begin{align}
w_G - w_D & = \sum_{n_G + n_V + n_D = n} \binom{n}{n_G,n_V,n_D} x_G^{n_G} x_V^{n_V} x_D^{n_D} (\frac{b_a}{n + 1}  - \frac{b_v n_V}{n - n_G + 1}) \\
& = \frac{b_a}{n + 1} - \Big( \sum_{m = 0}^n \binom{n}{m} p^{n - m} (1 - p)^m \Big( \sum_{k = 0}^m \binom{m}{k} q^k(1 - q)^{m - k} \frac{b_v k }{m + 1} \Big) \Big) \\
& = \frac{b_a}{n + 1} - \sum_{m = 0}^n \binom{n}{m} p^{n - m} (1 - p)^m  b_v q (1 - \frac{1}{m + 1}) \\
& = \frac{b_a}{n + 1}  - b_v q + \frac{b_v q}{(1 - p)(n + 1)}(1 - p^{n + 1}). \label{eq:GDgain}
\end{align}

Which means that the gain function for $p$ is:

\begin{align}
w_G - \langle w \rangle_{V,D} & = w_G - (q w_V + (1 - q)w_D) \\
& = w_G - w_D - q(w_V - w_D) \\
& = \frac{b_a}{n + 1}  - b_v q + \frac{b_v q}{(1 - p)(n + 1)}(1 - p^{n + 1}) - q(\frac{b_v}{(1 - p)(n + 1)}(1 - p^{n + 1}) - c) \\
& = \frac{b_a}{n + 1} - q(b_v - c).\label{eq:Ggain}
\end{align}

Thus, our replicator dynamics are given by:

\begin{align}
\dot{p} &= p(1 - p)(\overbrace{\frac{b_a}{n + 1} - q(b_v - c))}^{\text{gain function for } p} \label{eq:repP}\\
\dot{q} &= q(1 - q)(\underbrace{\frac{b_v}{n + 1}[\sum_{k = 0}^n p^k] - c)}_{\text{gain function for } q} \label{eq:repQ}
\end{align}

To find the boundaries between the dynamics, we then have to solve for when the gain functions in eqs.~\ref{eq:VDgain} and~\ref{eq:Ggain} are equal to zero.
From these gain functions, we can see that the the linear goods can have three possible dynamics.
First, consider when the gain function for $p$ crosses $0$:

\begin{itemize}
\item If $b_a > (b_v - c)(n + 1)$ then the gain function for $p$ is positive regardless of $q$, and the population converges towards all \GLY\;(green in figure~\ref{fig:dynReg}), else
\item if $b_a < (b_v - c)(n + 1)$ then we have consider when the gain function for $q$ crosses $0$. 
One of two cases is possible:
\begin{itemize}
\item If $b_v > c(n + 1)$ then regardless of the value of $p$, the gain function for $q$ is positive, so $q \rightarrow 1$ making the gain function for $p$ negative and the population converges towards all \VOP\;(red in figure~\ref{fig:dynReg}), else
\item if $b_v < c(n + 1)$ then the population will orbit around an internal fixed-point at $q^* = \frac{b_a}{(b_v - c)(n + 1)}$ (i.e. the zero of the gain function for $p$) and $p^*$ as the unique positive root of $\sum_{k = 0}^n p^k = \frac{c(n + 1)}{b_v}$ (i.e. the zero of the gain function for $q$). 
Thus, $1 - \frac{b_v}{c(n + 1)} \leq p^* \leq \frac{c(n + 1)}{b_v} - 1$. 
This is the yellow region in figure~\ref{fig:dynReg}.
\end{itemize}
\end{itemize}

For a proof of closed orbits, see appendix~\ref{app:orbits}.

\section{Orbits in a Hamiltonian System}
\label{app:orbits}

Let us rigorously establish the existence of closed orbits around the internal fixed point in the third dynamic region (yellow in figure~\ref{fig:dynReg}). For this, we will follow closely the analysis of the optional public good by Hauert et al.~\cite{VPG} as specified to this model in~\cite{cycles}.

Consider the equations governing the dynamics of the linear goods as given in eqs.~\ref{eq:repP} and~\ref{eq:repQ}.
Since $pq(1 - p)(1 - q)$ is strictly positive for $p,q \in (0,1)$, dividing the right hand side of the system in eqs.~\ref{eq:repP} and~\ref{eq:repQ} by $pq(1 - p)(1 - q)$ results only in a change of rate (dynamic rescaling of time). 
Thus, the resulting system preserves the orbits of our replicator dynamics. 
This new system is given by:

\begin{align}
\dot{p} &= \frac{b_a}{(n + 1)q(1 - q)} - \frac{b_v - c}{1 - q} \label{eq:HamP}\\
\dot{q} &= \frac{b_v}{(n + 1)p(1 - p)}[\sum_{k = 0}^n p^k] - \frac{c}{p(1 - p)} \label{eq:HamQshort} \\
&= \frac{b_v}{n + 1}\Big( \frac{1}{p(1 - p)} + \frac{n}{1 - p} - \sum_{k = 0}^{n - 2}(n - 1 - k)p^k \Big) - \frac{c}{p(1 - p)} \label{eq:HamQlong}
\end{align}

where eq.~\ref{eq:HamQlong} follows from eq.~\ref{eq:HamQshort} by dividing $\sum_{k = 0}^n p^k$ by $p(1 - p)$.

Define the following primitives:

\begin{align}
R(p) & := \frac{b_v - c(n + 1)}{n + 1}\ln \frac{p}{1 - p} - \frac{b_v}{n + 1} ( n \ln (1 - p) + \sum_{k = 0}^{n - 2} \frac{n - 1 - k}{k + 1}p^{k + 1})\\
S(q) & := \frac{b_a}{n + 1}\ln\frac{q}{1 - q} + (b_v - c) \ln (1 - q)
\end{align}

and the Hamiltonian $H = R - S$. 

Notice that eq.~\ref{eq:HamP}~and~\ref{eq:HamQlong} can be rewritten as $\dot{p} = - \frac{\partial H}{\partial q}, \dot{q} = \frac{\partial H}{\partial p}$. 
Thus, we have a 1-dimensional Hamiltonian system. 
When our conditions $b_a < (b_v - c)(n + 1)$ and $b_v < c(n + 1)$ are met then $(p^*,q^*)$ from appendix~\ref{app:gain} is a unique strict global minimum of $H$ in $(0,1)^2$. 
Thus  $(p^*,q^*)$ is a center of the dynamics with orbits of constant $H$. 
Finally, as $p$ or $q$ go to $0$ or $1$, $H(p,q)$ goes to $\infty$ if our conditions are met; thus the orbits are internal to $(0,1)^2$ and closed. 
Since our transformations preserved orbits, this means that the replicator dynamics of linear public and club goods have the same closed orbits.

\putbib
\end{appendices}
\end{bibunit}

\end{document}